\theoremstyle{plain}
\newtheorem{thm}{Theorem}
\newtheorem{cor}{Corollary}
\newtheorem{alg}{Algorithm}
\newtheorem{prot}{Protocol}
\newtheorem{defn}{Definition}
\newtheorem{expl}{Example}
\theoremstyle{nonumberplain}
\newtheorem{proof}{Proof}
\begin{document}

\title{Public Key Protocols over the Ring $E_{p}^{(m)}$%
  }

\author{Joan-Josep Climent%
\thanks{Departament de Matem\`{a}tiques, Universitat d'Alacant, email: \texttt{jcliment@ua.es}},
\\
Juan Antonio L\'{o}pez-Ramos%
\thanks{Departamento de Matem\'{a}ticas, Universidad de Almer\'{\i}a, email: \texttt{jlopez@ual.es}},
}


\maketitle

\begin{abstract}
In this paper we use the nonrepresentable ring $E_{p}^{(m)}$ to introduce public key cryptosystems in noncommutative settings and based on the Semigroup Action Problem and the Decomposition Problem respectively.
\end{abstract}

\section{Introduction}
\label{Intro}

Most public-key cryptosystems are based on certain specific problems of number theory. 
One of these problems is the \emph{Integer Factorization Problem} (IFP) over the ring $\mathbb{Z}_{n}$, being $n$ the product of two large prime numbers; the well known cryptosystem RSA \cite{Rivest1978} is based in this problem. 
The second classical problem is the \emph{Discrete Logarithm Problem} (DLP) over a finite field $\mathbb{Z}_{p}$, being $p$ a large prime; the Diffie-Hellman key exchange protocol \cite{Diffie1976} and ElGamal protocol \cite{ElGamal1985} are based on this problem. 
In general, we can say that their robustness depends on the computational difficulty of solving certain mathematical problems over finite commutative algebraic structures. 
Some efficient attacks based on the commutative property of these structures are well known: Quadratic Sieve, General Number Field Sieve, Pollard's rho algorithm, Index-Calculus, etc.\ (see, for example, \cite{Menezes1996bk,Stinson1995bk}, and the references within these books).

This fact, together with the increase of the computational power of modern computers, has made these techniques become more and more insecure.
As a result there exists an active field of research known as noncommutative algebraic cryptography (see, for example, \cite{Anshel1999,Ko2007,Ko2000,Sakalauskas2003,Sidelnikov1994}) aiming to develop and analyse new cryptosystems and key exchange protocols based on noncommutative cryptographic platforms.
A very good exposition of the problems underlying in this noncommutative approach can be found in \cite{Myasnikov2008bk} and some of them are the following, where $G$ is a nonabelian group:

\begin{itemize}
\item \emph{Conjugator Search Problem (CSP):}
Given $(x,y) \in G \times G$, the problem is to find $z \in G$ such that $y = z^{-1} x z$.
\item \emph{Decomposition Problem (DP):}
Given $(x,y) \in G \times G$ and $S \subseteq G$, the problem is to find $z_{1}, z_{2} \in S$ such that $y = z_{1} x z_{2}$.
\item \emph{Symmetrical Decomposition Problem (SDP):}
Given $(x,y) \in G \times G$ and $m,n \in \mathbb{Z}$, the problem is to find $z \in G$ such that $y = z^{m} x z^{n}$.
\item \emph{Generalized Symmetrical Decomposition Problem (GSDP):}
Given $(x,y) \in G \times G$, $S \subseteq G$ and $m, n \in \mathbb{Z}$, the problem is to find $z \in S$ such that $y = z^{m} x z^{n}$.
\end{itemize}

Note that for the DP we can assume that $G$ is a semigroup, since in this case we do not need invertible elements.

Several authors have proposed and used certain nonabelian groups for
key exchange problems. In
\cite{Anshel2001,Anshel1999,Ko2007,Ko2000}, the authors suggest to
use braid groups as platform groups for their respective protocols.
In \cite{Paeng2001}, the authors propose a public key cryptosystem
whose security is based on the DLP for the automorphism defined by
the conjugation operation and the difficulty to find the conjugate
element on finite nonabelian groups. In \cite{Shpilrain2006}, the
authors suggest the use of a finite representation of a nonabelian
group, called Thomson's group, to develop a public key cryptosystem,
where they raised for the first time the difficulty to find a
solution for the SDP. Finally, in \cite{Thomas2008}, the authors
propose a cryptosystem whose robustness is based on the difficulty
to solve the CSP and SDP over any noncommutative algebraic
structure.

In the noncommutative setting, we can find different implementations based on the Diffie-Hellman protocol in matrix rings, for different kind of matrices \cite{Climent2006b,Stickel2005,Yoo2000}.
A detachable recent work in this setting is \cite{Mahalanobis2013} where the author shows the usability of a certain class of matrices that arise a noncommutative framework to define the ElGamal cryptosystem and the corresponding Diffie-Hellman protocol as well as its eficiency.
Another system based on the discrete logarithm problem in the authomorphisms group is the so-called MOR cryptosystem, recently described for authomorphisms of $p$-groups whose order is coprime to $p$ in \cite{Mahalanobis2008b,Mahalanobis2015}.

Moreover, the idea to develop systems of open distribution keys as well as session key exchange protocols, on the basis of noncommutative (semi)groups, is present in \cite{Mahalanobis2008,Sakalauskas2003,Sidelnikov1994}.

Finally, with the idea to generalize the protocols based on groups
and take advantage of the difficulty of to solve the DLP in these
groups, Shpilrain and Zapata in \cite{Shpilrain2006b} give a general
framework using actions of groups to define key exchange protocols.
In this setting, Maze \emph{et al.} \cite{Maze2007} introduce the
\emph{Semigroup Action Problem} (SAP) as:
\begin{quote}
Let $G$ be a finite semigroup acting on a finite set $S$.
Given $x, y \in S$ with $y = g \cdot x$ for some $g \in G$, find $h \in G$ such that $y = h \cdot x$.
\end{quote}
In this way, any Abelian semigroup $G$ acting on a finite set $S$ provides a Diffie-Hellman key exchange protocol and an ElGamal protocol (cf. \cite{Maze2007}).

\medskip

Our aim in this paper is to provide public key cryptosystems in a noncommutative setting given by the ring $E_p^{(m)}$ based on the SAP and DP. 
The remainder of this paper is organized as follows. 
In Section~\ref{ringEpm} we remind some properties of the nonrepresentable ring $E_p^{(m)}$ that will be used through this paper. 
In Section~\ref{PKCSAP} we use an action of the ring $E_p^{(m)}$ to introduce a public key cryptosystem based on the SAP. 
Finally, in Section~\ref{PKCDP} we use this ring to give a public key cryptosystem based on the DP and we relate security of both introduced cryptosystems.

\section{The ring $E_{p}^{(m)}$}
\label{ringEpm}

Bergman \cite{Bergman1974} proved that $\mathrm{End}(\mathbb{Z}_{p} \times \mathbb{Z}_{p^{2}})$ is a semilocal ring that cannot be embedded in matrices over any commutative ring.
In \cite{Climent2011a} the authors showed that $\mathrm{End}(\mathbb{Z}_{p} \times \mathbb{Z}_{p^{2}})$
is isomorphic to a ring whose elements can be expressed as square matrices whose rows are given by elements in $\mathbb{Z}_{p}$ and $\mathbb{Z}_{p^{2}}$ respectively and with an arithmetic that is similar to matrix addition and multiplication.
Then they use this ring to define a key exchange protocol (see also \cite{Climent2011c,Climent2012c}).
However this was cryptoanalyzed in \cite{Kamal2012} using some invertible elements, which motivated the introduction of an extension of such a ring \cite{Climent2014d} such that almost all elements are noninvertible

This new ring is defined as the set (see \cite[Theorem~1]{Climent2014d})
\[
  E_{p}^{(m)}
  =
  \left\{
    [a_{ij}] \in \mathrm{Mat}_{m \times m} (\mathbb{Z})
    \ | \
    a_{ij} \in \mathbb{Z}_{p^{i}} \ \text{if} \ i \leq j,
    \ \text{and} \
    a_{ij} \in p^{i-j} \mathbb{Z}_{p^{j}} \ \text{if} \ i > j
  \right\}
\]
with the addition and the multiplication defined, respectively, as follows
\begin{gather*}
  \big[
    a_{ij}
  \big]
  +
  \big[
    b_{ij}
  \big]
  =
  \big[
    (a_{ij}+b_{ij}) \bmod{p^{i}}
  \big], \\
  \big[
    a_{ij}
  \big]
  \cdot
  \big[
    b_{ij}
  \big]
  =
  \left[
    \left(
      \sum_{k=1}^{m} a_{ik} b_{kj}
    \right)
    \bmod{p^{i}}
  \right].
\end{gather*}
Here $\mathrm{Mat}_{m \times m} (\mathbb{Z})$ denotes the set of $m \times m$ matrices with entries in $\mathbb{Z}$, and $p^{r} \mathbb{Z}_{p^{s}}$ denotes the set $\left\{p^{r} u \ | \ u \in \mathbb{Z}_{p^{s}} \right\}$ for positive integers $r$ y $s$.
Moreover, $\left| E_{p}^{(m)} \right| = p^{(2 m^{3} + 3 m^{2} + m)/6}$.

The following results on the ring $E_{p}^{(m)}$ can be found in \cite{Climent2014d}.

\begin{thm}[Theorem~3 of \cite{Climent2014d}]
Let $A = [a_{ij}] \in E_{p}^{(m)}$.
Then $A$ is invertible if and only if $a_{ii} \not\equiv_{p} 0$ for every $i=1,2,\dots,m$.
\end{thm}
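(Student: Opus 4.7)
The plan is to exhibit the ring homomorphism $\pi\colon E_{p}^{(m)} \to \mathrm{Mat}_{m\times m}(\mathbb{Z}_{p})$ that reduces each entry modulo $p$ and to read off invertibility from its image. That $\pi$ respects the $E_{p}^{(m)}$-operations follows directly from their definitions, since further reducing $(a_{ij}+b_{ij})\bmod p^{i}$ or $(\sum_{k}a_{ik}b_{kj})\bmod p^{i}$ modulo $p$ yields ordinary $\mathbb{Z}_{p}$-arithmetic. Moreover, for $i>j$ the entry $a_{ij}\in p^{i-j}\mathbb{Z}_{p^{j}}$ is divisible by $p$, so $\pi(A)$ is upper triangular over $\mathbb{Z}_{p}$ with diagonal $\overline{a_{11}},\dots,\overline{a_{mm}}$. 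Invertibility of $A$ therefore forces invertibility of $\pi(A)$, which, for a triangular matrix over a field, is equivalent to every $\overline{a_{ii}}$ being nonzero; this proves necessity.

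For sufficiency, assume each $a_{ii}\not\equiv_{p}0$; then each $a_{ii}$ is a unit in $\mathbb{Z}_{p^{i}}$, so $D=\mathrm{diag}(a_{11},\dots,a_{mm})$ is invertible in $E_{p}^{(m)}$, with inverse $\mathrm{diag}(a_{11}^{-1},\dots,a_{mm}^{-1})$. Writing $A=D(I+N)$ with $N=D^{-1}A-I\in E_{p}^{(m)}$ reduces the task to inverting $I+N$, which I would do via the (finite) geometric series $(I+N)^{-1}=\sum_{k\geq 0}(-N)^{k}$, giving $A^{-1}=(I+N)^{-1}D^{-1}$ as soon as $N$ is shown to be nilpotent.

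Establishing nilpotence of $N$ is the main obstacle, and I would handle it in two stages. First, since $\pi(A)$ is upper triangular with diagonal exactly $\pi(D)$, the product $\pi(D)^{-1}\pi(A)$ is upper triangular with $1$'s on the diagonal, so $\pi(N)$ is strictly upper triangular and consequently $\pi(N)^{m}=0$; equivalently, $N^{m}\in\ker\pi$. Second, $\ker\pi$ is a nilpotent ideal satisfying $(\ker\pi)^{m}=0$: each integer representative $b_{ij}$ of an element of $\ker\pi$ is divisible by $p$, so in any product $B_{1}\cdots B_{m}$ every $(i,j)$-entry is a sum of terms divisible by $p^{m}$, which vanishes modulo $p^{i}$ since $i\leq m$. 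Applying this to the single element $N^{m}\in\ker\pi$ yields $N^{m^{2}}=0$, so the geometric series for $(I+N)^{-1}$ is indeed a finite sum in $E_{p}^{(m)}$, completing the construction of $A^{-1}$.
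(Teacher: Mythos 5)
Your argument is correct. Note, however, that the paper you were given does not prove this statement at all: it is imported verbatim as Theorem~3 of \cite{Climent2014d}, so there is no in-paper proof to compare against; in the cited source the result is obtained by a rather computational route (an explicit block/inductive construction of the inverse tied to the identification with $\mathrm{End}(\mathbb{Z}_{p}\times\cdots\times\mathbb{Z}_{p^{m}})$). Your proof is a clean, self-contained structural alternative: the entrywise reduction $\pi\colon E_{p}^{(m)}\to\mathrm{Mat}_{m\times m}(\mathbb{Z}_{p})$ is indeed a unital ring homomorphism with upper-triangular image (because $a_{ij}\in p^{i-j}\mathbb{Z}_{p^{j}}$ forces $p\mid a_{ij}$ for $i>j$), which gives necessity immediately, and the factorization $A=D(I+N)$ together with the nilpotence of $N$ gives sufficiency. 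The two nilpotence ingredients both check out: $\pi(N)$ is strictly upper triangular so $N^{m}\in\ker\pi$, and $(\ker\pi)^{m}=0$ because a product of $m$ matrices with all entries divisible by $p$ has integer entries divisible by $p^{m}$, hence vanishing modulo $p^{i}$ for every $i\leq m$. The only step I would ask you to make explicit is the one you elide in that last computation: products in $E_{p}^{(m)}$ are defined with intermediate reductions modulo $p^{i}$ in row $i$, so you should observe (as follows from a one-line induction) that the $(i,j)$-entry of any iterated product in $E_{p}^{(m)}$ is congruent modulo $p^{i}$ to the corresponding entry of the ordinary integer matrix product; with that remark, both the homomorphism property of $\pi$ and the divisibility argument are airtight. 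What your approach buys is generality and transparency -- it isolates $\ker\pi$ as a nilpotent ideal and reduces the whole question to triangular matrices over the field $\mathbb{Z}_{p}$ -- at the mild cost of a non-sharp nilpotency exponent $m^{2}$, which is harmless since the geometric series is finite either way.
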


As a consequence, the number of invertible elements in $E_{p}^{(m)}$ is $p^{(2 m^{3} + 3 m^{2} - 5 m)/6}$ (see \cite[Corollary~1]{Climent2014d}).
Moreover, the fraction of invertible elements in $E_{p}^{(m)}$ is
\begin{math}
  \left(
    \frac{p-1}{p}
  \right)^{m}.
\end{math}
So, an adequate election of $p$ and $m$ may result that the number of invertible elements in $E_{p}^{(m)}$ is almost zero (see \cite[Tables~1 and 2]{Climent2014d}).

Another needed property is the characterization of the center of the
ring $E_{p}^{(m)}$. The following result provides such a
characterization.

\begin{thm}[Page~359 of \cite{Climent2014d}] \label{centerEpm}
The center of $E_{p}^{(m)}$ is given by the set
\[
  Z(E_{p}^{(m)})
  =
  \left\{
    [a_{ij}] \in E_{p}^{(m)}
    \ | \
    a_{ii} = \sum_{j=0}^{i-1} p^{j} u_{j}, \ \text{with} \ u_{j} \in \mathbb{Z}_{p}
    \ \text{and} \
    a_{ij} = 0 \ \text{if} \ i \neq j
  \right\}.
\]
As a consequence the number of central elements in $E_{p}^{(m)}$ is $p^{m}$.
\end{thm}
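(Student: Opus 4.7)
The plan is to characterize central elements by testing against a small, well-chosen family of matrices in $E_{p}^{(m)}$ and then verifying the converse.

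First I would show that any central $A = [a_{ij}]$ must be diagonal. For each fixed $k \in \{1,2,\dots,m\}$, let $D_{k}$ be the matrix with $1$ in position $(k,k)$ and $0$ elsewhere; since $1 \in \mathbb{Z}_{p^{k}}$, $D_{k}$ lies in $E_{p}^{(m)}$. A direct computation of $A D_{k}$ and $D_{k} A$ shows that $A D_{k}$ has support in column $k$ while $D_{k} A$ has support in row $k$; equating them entry-by-entry (mod $p^{i}$) forces $a_{ik} = 0$ for $i \neq k$ and $a_{ki} = 0$ for $i \neq k$. Running $k$ from $1$ to $m$ pins down $A$ as diagonal.

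Next I would extract the compatibility conditions among the diagonal entries. For each pair $i \neq j$, let $B$ be the matrix with a single nonzero entry in position $(i,j)$: take this entry to be $1$ if $i \leq j$, and $p^{i-j}$ if $i > j$, which are the minimal admissible values according to the definition of $E_{p}^{(m)}$. For $A$ diagonal, $(AB)_{ij} \equiv a_{ii} \, b_{ij} \pmod{p^{i}}$ and $(BA)_{ij} \equiv b_{ij} \, a_{jj} \pmod{p^{i}}$. In both cases the identity $AB = BA$ simplifies to
\[
  a_{ii} \equiv a_{jj} \pmod{p^{\min(i,j)}},
\]
because the factor $p^{i-j}$ in the $i>j$ case reduces the congruence from modulus $p^{i}$ to modulus $p^{j}$. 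Applying these congruences successively (for $j = 2, 3, \dots, m$ against $i = 1, 2, \dots, j-1$) and expanding each $a_{jj}$ in base $p$ shows that the first $i$ base-$p$ digits of $a_{jj}$ must coincide with $a_{ii}$, forcing
\[
  a_{ii} = \sum_{j=0}^{i-1} p^{j} u_{j}
\]
for a common sequence $u_{0}, u_{1}, \dots, u_{m-1} \in \mathbb{Z}_{p}$, exactly as claimed. The cardinality $p^{m}$ then follows from the free choice of $(u_{0},\dots,u_{m-1})$.

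Finally I would check the converse: any diagonal matrix $A$ of the stated form commutes with an arbitrary $B = [b_{ij}] \in E_{p}^{(m)}$. Here $(AB - BA)_{ij} \equiv (a_{ii} - a_{jj}) b_{ij} \pmod{p^{i}}$. Since $a_{ii} - a_{jj}$ is divisible by $p^{\min(i,j)}$, and the admissibility rule on $b_{ij}$ supplies an extra factor of $p^{i-j}$ exactly when $i > j$, the product lies in $p^{i}\mathbb{Z}$ in every case, so $AB = BA$.

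The main obstacle is bookkeeping the two different moduli at play simultaneously: each row of a product reduces modulo $p^{i}$, while the admissibility of an entry in column $j$ depends on whether $i \le j$ or $i > j$. Choosing the test matrices to have the minimal admissible nonzero entry in each position is what makes the congruence $a_{ii} \equiv a_{jj} \pmod{p^{\min(i,j)}}$ come out cleanly without losing information.
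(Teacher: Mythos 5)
Your argument is correct. Note first that the paper itself offers no proof of this statement: it is quoted verbatim from page~359 of \cite{Climent2014d}, so there is no internal proof to compare against. Your write-up is a sound, self-contained verification. The matrix-unit tests are legitimate elements of $E_{p}^{(m)}$ (the entry $1$ is admissible at any position $(i,j)$ with $i \le j$, and $p^{i-j}$ is the minimal admissible nonzero entry when $i > j$), the diagonal reduction via the $D_k$ is clean, and the congruence $a_{ii} \equiv a_{jj} \pmod{p^{\min(i,j)}}$ is exactly the right invariant: it is what the single-entry test matrices extract, and it is also precisely what is needed in the converse direction, where the factor $p^{i-j}$ carried by a subdiagonal entry $b_{ij}$ upgrades divisibility by $p^{j}$ to divisibility by $p^{i}$. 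The base-$p$ digit bookkeeping yielding $a_{ii} = \sum_{j=0}^{i-1} p^{j} u_{j}$ and the count $p^{m}$ both follow as you say. The one point worth stating explicitly (you use it implicitly) is that each $a_{ik}$ is already a reduced representative modulo $p^{i}$, so the congruence $a_{ik} \equiv 0 \pmod{p^{i}}$ obtained from the $D_k$ test genuinely forces $a_{ik} = 0$ rather than merely a congruence.
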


\section{A public key cryptosystem based on the SAP}
\label{PKCSAP}

Our aim in this section is to use the arithmetic of the ring
$E_{p}^{(m)}$ and an action of this ring to define a public key
cryptosystem in a noncommutative setting.

Thus, based on the multiplication defined on the ring $E_{p}^{(m)}$ and the structure of its elements we may define an action of the ring $E_{p}^{(m)}$ over the set $\mathbb{Z}_{p} \times \mathbb{Z}_{p^{2}} \times \cdots \times \mathbb{Z}_{p^{m}}$.
As we will show, this action will arise a public key cryptosystem based on the SAP (cf.\ \cite{Maze2007}).

Let us consider the center $Z(E_{p}^{(m)})$ of the ring $E_{p}^{(m)}$.
For a given $M \in E_{p}^{(m)}$, let $\mathrm{Cen}(M)$ be the set of elements $X$ in $E_{p}^{(m)}$ such that $X M = M X$.
The algorithm is given by the following

\begin{alg} \label{sap}
Let $M \in E_{p}^{(m)}$ be a public value and $S \in \mathbb{Z}_{p} \times \mathbb{Z}_{p^{2}} \times \cdots \times \mathbb{Z}_{p^{m}}$ the message that Bob wants to send Alice.
Then:
\begin{enumerate}
\item \label{sap.1}
Alice chooses $R \in \mathbb{Z}_{p} \times \mathbb{Z}_{p^{2}} \times \cdots \times \mathbb{Z}_{p^{m}}$, $F \in \mathrm{Cen}(M)$ and computes $T = F \cdot R$.

\item \label{sap.2}
Alice makes public the pair $(R,T)$, keeping secret her private key $F$.

\item \label{sap.3}
Bob chooses randomly $G = \displaystyle \sum_{i=0}^{k} C_{i} M^{i}$, where $C_{i} \in Z(E_{p}^{(m)})$ and sends Alice the pair
\begin{math}
  (H, D)
  =
  (G \cdot R, S + G \cdot T).
\end{math}

\item \label{sap.4}
Alice gets the secret by computing $S = D - F \cdot H$.
\end{enumerate}
\end{alg}

Note that the commutativity of $F$ and $G$ gives the correctness of the preceding algorithm and that Alice and Bob could exchange the way they choose the elements appearing through it.

As it is asserted in the general case in \cite{Maze2007} we can observe that breaking the preceding algorithm involves solving the SAP, i.e., given the values $R$ and $T = F \cdot R \in \mathbb{Z}_{p} \times \mathbb{Z}_{p^{2}} \times \cdots \times \mathbb{Z}_{p^{m}}$, find $A \in E_{p}^{(m)}$ such that $T = A \cdot
R$.

To understand the size of the underlying SAP, let $p$ be a prime, consider $x \in \mathbb{Z}_{p^{m}}$ coprime to $p$, and let $n$ be the order of $x$.
Now let $M = [a_{ij}]$ be the element of $E_{p}^{(m)}$ given by
\[
  a_{ij}
  =
  \begin{cases}
    x, & \text{for $i=j=m$}, \\
    0, & \text{otherwise}.
  \end{cases}
\]
For $k=1,2,\ldots$, assume that $M^{k} = [a_{ij}^{(k)}]$; then
\[
  a_{ij}^{(k)}
  =
  \begin{cases}
    x^{k}, & \text{for $i=j=m$}, \\
    0,     & \text{otherwise}.
  \end{cases}
\]
Thus it is clear that $M^{n+1} = M$ and $M, M^{2}, \ldots, M^{n}$ are pairwise different.
Since $M^{0} = I$ (by definition), the elements of $G = \displaystyle \sum_{k=0}^{n} C_{k} M^{k}$ are in
$\mathrm{Cen}(M)$ and have the form
\begin{equation} \label{eqG}
  G
  =
  \mathrm{diag}
  \left(
    c_{0}^{(0)},
    c_{0}^{(0)} + c_{1}^{(0)} p,
    \ldots,
    \sum_{l=0}^{m-2} c_{l}^{(0)} p^{l},
    \sum_{l=0}^{m-1} c_{l}^{(0)} p^{l}
    +
    \sum_{k=1}^{n}
    \left(
      \sum_{l=0}^{m-1} c_{l}^{(k)} p^{l}
    \right)
    x^{k}
  \right)
\end{equation}
where, by Theorem~\ref{centerEpm},
\[
  C_{k}
  =
  \mathrm{diag}
  \left(
    c_{0}^{(k)},
    c_{0}^{(k)} + c_{1}^{(k)} p,
    \ldots,
    \sum_{l=0}^{m-2} c_{l}^{(k)} p^{l},
    \sum_{l=0}^{m-1} c_{l}^{(k)} p^{l}
  \right).
\]

Thus, accordingly to expression (\ref{eqG}), for every possible choice of the element
\[
  C_{0}
  =
  \mathrm{diag}
  \left(
    c_{0}^{(0)},
    c_{0}^{(0)} + c_{1}^{(0)} p,
    \ldots,
    \sum_{l=0}^{m-2} c_{l}^{(0)} p^{l},
    \sum_{l=0}^{m-1} c_{l}^{(0)} p^{l}
  \right)
\]
in $Z(E_{p}^{(m)})$, there are as many elements $G$ as the number of elements of the set
\begin{equation} \label{eq0}
  \left\{
    \sum_{l=0}^{m-1} c_{l}^{(0)} p^{l}
    +
    \sum_{k=1}^{n}
    \left(
      \sum_{l=0}^{m-1} c_{l}^{(k)} p^{l}
    \right)
    x^{k}
    \ | \
    C_{1}, \ldots, C_{n} \in Z(E_{p}^{(m)})
  \right\}.
\end{equation}

We point out that in the preceding expression, $C_{1}, \ldots, C_{n}$ are not necessarily distinct and many of them could be the zero element in $E_p^{(m)}$.

But the set given in expression (\ref{eq0})
may be expressed as
\[
  \left\{
    \sum_{l=0}^{m-1} c_{l}^{(0)} p^{l}
    +
    \sum_{l=0}^{m-1}
    \left(
     \sum_{k=1}^{n} c_{l}^{(k)} x^{k}
    \right)
    p^{l}
    \ | \
    C_{1}, \ldots, C_{n} \in Z(E_{p}^{(m)})
  \right\}
\]
which gives all the elements of $\mathbb{Z}_{p^{m}}$ given that $C_{1}, \ldots ,C_{n}$ may be chosen in such a way that we get the $p$-adic decomposition of every element in $\mathbb{Z}_{p^{m}}$.
Therefore, for every element in $Z(E_{p}^{(m)})$, there are as many elements $G$ as elements in $\mathbb{Z}_{p^{m}}$, which results that we have $p^{m} \cdot p^{m} = p^{2m}$ different possibilities to select the element $G$.

An analogous result is obtained for $M = [a_{ij}]$ with
\[
  a_{ij}
  =
  \begin{cases}
    x,         & \text{for $i=j=m$}, \\
    y p^{m-1}, & \text{for $i=m$ and $j=1$}, \\
    0, & \text{otherwise}.
  \end{cases}
\]
and $y \in \mathbb{Z}_{p}$ is an element of order $p-1$.

On the other hand, in \cite{Maze2007} the authors show that the nearest the semigroup acting on the corresponding set is to be a group, i.e.\ the bigger the subgroup of units in the semigroup is, the easier is to develop a Polling-Hellman algorithm in the semigroup to compute the corresponding discrete logarithm.
In this case, as it was noted previously, the set of noninvertible elements in $E_{p}^{(m)}$ might be very close to the whole semigroup with an adequate election of $p$ and $m$.

An additional property of the preceding algorithm is the use of a different $G$ for every encryption.
In ElGamal cryptosystem \cite{ElGamal1985} the election of a random parameter in every encryption avoids an attack based on a pair plain text-encrypted text, given the existence of invertible elements.
The same idea cannot be developed in this case due to the almost nonexistence of invertible elements.
As in ElGamal case, the election of a different $G$ in every encryption avoids repetition attacks.

Now let $(R, T)$ be a public key, with $T = F \cdot R$, where $F \in \mathrm{Cen}(M)$, and let $A$ be a solution of the SAP, i.e., $A \cdot R = T$.
Suppose also that $G$ in step~\ref{sap.3} of Algorithm~\ref{sap} is such that $A G = G A$.
Then, given the encrypted message
\[
  (H, D)
  =
  (G \cdot R, S + G \cdot T)
\]
we have that
\[
  S + G \cdot T - (A G) \cdot R
  =
  S + (G A) \cdot R - (A G) \cdot R
  =
  S.
\]

Thus it is crucial that the element $G$ in step~\ref{sap.3} of Algorithm~\ref{sap} does not commute with the solution $A$ of the SAP.
Therefore to avoid that $G$ commute with any such solution, we must require that $G$ is not in the center of the multiplicative semigroup of $E_{p}^{(m)}$, which can be easily checked by requiring that there exists at least a nonzero element out of its main diagonal (see Theorem~\ref{centerEpm}).

Then suppose that an attacker tries to find $A \in Z(E_{p}^{(m)})$ being a solution of the SAP, i.e., $A \cdot R = T$.
Thus, again, according to Theorem~\ref{centerEpm}, $A$ must be of the form
\begin{equation} \label{eq3}
  A
  =
  \mathrm{diag}
  \left(
    a_{0},
    a_{0} + a_{1} p,
    \ldots,
    \sum_{j=0}^{m-2} a_{j} p^{j},
    \sum_{j=0}^{m-1} a_{j} p^{j}
  \right),
\end{equation}
with $a_{j} \in \mathbb{Z}_{p}$ for every $j = 0,1,2,\ldots,m-1$.

\begin{thm}
Assume that
\[
  R = (r_{0},r_{1},\ldots,r_{m-1})
  \quad \text{and} \quad
  T = (t_{0},t_{1},\ldots,t_{m-1})
\]
are elements in $\mathbb{Z}_{p} \times \mathbb{Z}_{p^{2}} \times \cdots \times \mathbb{Z}_{p^{m}}$ such that $A \cdot R = T$.
If $r_{j} \not\equiv_{p} 0$ for $j=0,1,2,\ldots,m-1$ then
\[
  r_{0}^{-1} t_{0}
  \equiv_{p}
  r_{1}^{-1} t_{1}
  \equiv_{p}
  \cdots
  \equiv_{p}
  r_{m-1}^{-1} t_{m-1}.
\]
\end{thm}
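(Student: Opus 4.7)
The plan is to exploit the very explicit diagonal shape of an element $A \in Z(E_p^{(m)})$ as given in equation~(\ref{eq3}), together with the fact that the action of $E_p^{(m)}$ on $\mathbb{Z}_p \times \mathbb{Z}_{p^2} \times \cdots \times \mathbb{Z}_{p^m}$ is essentially matrix–vector multiplication with the $i$-th component reduced modulo $p^i$. Since $A$ is diagonal, this action is just componentwise multiplication.

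Concretely, I would first write down $A \cdot R = T$ coordinate by coordinate: for each $i = 0, 1, \ldots, m-1$,
\[
  t_i
  \equiv
  \left( \sum_{j=0}^{i} a_j p^{j} \right) r_i
  \pmod{p^{i+1}}.
\]
This is the step where one has to be careful about the index convention (the $k$-th diagonal entry of $A$ lives in $\mathbb{Z}_{p^{k}}$, and the components of $R$ and $T$ are indexed from $0$ to $m-1$), but no real difficulty arises.

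Next I would reduce this identity modulo $p$. Every term of the sum $\sum_{j=0}^{i} a_j p^{j}$ with $j \geq 1$ vanishes mod $p$, so the identity collapses to
\[
  t_i \equiv a_0 \, r_i \pmod{p}
\]
for every $i = 0, 1, \ldots, m-1$. Since by hypothesis $r_i \not\equiv_p 0$, the element $r_i$ is a unit modulo $p$, and we may multiply by its inverse to obtain $r_i^{-1} t_i \equiv a_0 \pmod{p}$. The right-hand side is independent of $i$, which immediately gives the chain of congruences in the statement.

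There is essentially no obstacle here: the proof is a direct unwinding of the definitions combined with the observation that reduction modulo $p$ erases all diagonal entries of a central element except their constant term $a_0$. The only place a reader could stumble is bookkeeping between the matrix index (from $1$ to $m$) used in Theorem~\ref{centerEpm} and the tuple index (from $0$ to $m-1$) used in the statement, so I would be explicit about this correspondence at the outset.
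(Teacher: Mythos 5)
Your proposal is correct and follows essentially the same route as the paper: write out $A\cdot R = T$ componentwise using the diagonal form (\ref{eq3}), reduce the $j$-th congruence from modulus $p^{j+1}$ down to modulus $p$ so that only the constant term $a_{0}$ survives, and invert $r_{j}$ modulo $p$ to get $r_{j}^{-1}t_{j}\equiv_{p}a_{0}$ for every $j$. No meaningful difference from the paper's argument.
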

\begin{proof}
Since $A \cdot R = T$ it follows that
\begin{gather}
  a_{0} r_{0} \equiv_{p} t_{0},  \label{eq4} \\
  (a_{0} + p a_{1} + \cdots + p^{j} a_{j}) r_{j}
  \equiv_{p^{j+1}}
  t_{j},
  \quad
  \text{for $j=1,2,\ldots,m-1$}. \label{eq5}
\end{gather}

Now, from expression (\ref{eq4}) and the fact that $r_{0} \not\equiv_{p} 0$, we obtain that $a_{0} \equiv_{p} r_{0}^{-1} t_{0}$.

Assume now that $j=1,2,\ldots,m-1$.
From expression (\ref{eq5}) we have that
\[
  (a_{0} + p a_{1} + \cdots + p^{j} a_{j}) r_{j} - t_{j}
  =
  p^{j+1} h,
  \quad \text{for some $h \in \mathbb{Z}$},
\]
so, $p \mid (t_{j} - a_{0} r_{j})$.
Therefore, $a_{0} r_{j} \equiv_{p} t_{j}$ and using the fact that $r_{j} \not\equiv_{p} 0$, we obtain that $a_{0} \equiv_{p} r_{j}^{-1} t_{j}$.
\end{proof}

As an immediate consequence we get the following result.

\begin{cor} \label{conditionSAP}
Let $F \in E_{p}^{(m)}$ as in Algorithm~\ref{sap} defining a public key $(R,T)$, of some user, with $T = F \cdot R$, being $R = (r_{0},r_{1},\ldots,r_{m-1})$ and $T = (t_{0},t_{1},\ldots,t_{m-1})$ with $r_{j}, t_{j} \in \mathbb{Z}_{p^{j+1}}$, for $j = 0,1,2,\ldots,m-1$.
If $p \nmid r_{j}$ for every $j=0,1,2,\ldots,m-1$ and there exits $k$ such that $r_{k}^{-1} t_{k} \not \equiv_{p} r_{0}^{-1} t_{0}$, then an attacker cannot compute $A \in Z(E_{p}^{(m)})$ such that $T = A \cdot R$.
\end{cor}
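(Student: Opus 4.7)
The plan is to argue by direct contrapositive of the preceding theorem, with essentially no extra work needed. Suppose, toward a contradiction, that an attacker has produced an element $A \in Z(E_{p}^{(m)})$ satisfying $T = A \cdot R$. Because $A$ lies in the center, Theorem~\ref{centerEpm} forces $A$ to have the diagonal shape displayed in equation~(\ref{eq3}), i.e.\ $A = \mathrm{diag}\!\left(a_{0}, a_{0}+a_{1}p, \ldots, \sum_{j=0}^{m-1} a_{j} p^{j}\right)$ with coefficients $a_{j} \in \mathbb{Z}_{p}$. Thus the hypotheses of the preceding theorem are met: $A$ is central, $A \cdot R = T$, and by assumption $r_{j} \not\equiv_{p} 0$ for every index $j = 0, 1, \dots, m-1$.

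First I would invoke the theorem to obtain the chain of congruences
\[
  r_{0}^{-1} t_{0} \equiv_{p} r_{1}^{-1} t_{1} \equiv_{p} \cdots \equiv_{p} r_{m-1}^{-1} t_{m-1}.
\]
In particular, specializing to the index $k$ singled out in the statement, one has $r_{k}^{-1} t_{k} \equiv_{p} r_{0}^{-1} t_{0}$, which directly contradicts the hypothesis $r_{k}^{-1} t_{k} \not\equiv_{p} r_{0}^{-1} t_{0}$. The contradiction rules out the existence of any central solution $A$ to the SAP instance $A \cdot R = T$.

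Since there is no hard step here, the corollary is really just a repackaging of the theorem: the theorem gives a necessary condition on the ratios $r_{j}^{-1} t_{j}$ modulo $p$ for a central $A$ to exist, and the corollary states that violating this condition at any coordinate $k$ obstructs every central attack. The only subtlety worth flagging is that the invertibility of each $r_{j}$ modulo $p$ is needed to form the ratios $r_{j}^{-1} t_{j}$ that appear in the conclusion, which is why that hypothesis is kept explicit; the assertion ``an attacker cannot compute $A \in Z(E_{p}^{(m)})$'' should be read in the strong sense that no such $A$ exists at all, not merely that it is computationally infeasible.
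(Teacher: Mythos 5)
Your proof is correct and takes the same route the paper intends: the paper states the corollary without proof as an ``immediate consequence'' of the preceding theorem, and your contrapositive argument (a central $A$ would force all ratios $r_{j}^{-1}t_{j}$ to agree modulo $p$, contradicting the hypothesis at index $k$) is exactly the reasoning being left implicit. Your closing remark that the conclusion should be read as nonexistence of any central solution, not mere computational infeasibility, is a fair and accurate clarification.
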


\section{A public key cryptosystem based on the DP}
\label{PKCDP}

Our aim in this section is to give a trap-door function based on the DP in $E_{p}^{(m)}$ following ElGamal's ideas in the case of the DLP.
ElGamal \cite{ElGamal1985} introduced his cryptosystem based on the Diffie-Hellman key exchange
protocol \cite{Diffie1976}.
Thus we provide a key exchange in the ring $E_{p}^{(m)}$ following the CAKE concept introduced in
\cite[Definition~3]{Shpilrain2006b}.

To do so, given $M \in E_{p}^{(m)}$, we will consider the set
\begin{equation} \label{eq6}
  H(M)
  =
  \left\{
    \sum_{i=0}^{k} C_{i} M^{i}
    \ | \
    C_{i} \in Z(E_{p}^{(m)}), k \in \mathbb{Z}^{+}
  \right\}
\end{equation}
where $Z(E_{p}^{(m)})$ is the center of the semiring $E_{p}^{(m)}$ (see Theorem~\ref{centerEpm}) and $\mathbb{Z}^{+}$ denotes the set of positive integers.

\begin{prot}[DHDP protocol] \label{prot1}
Alice and Bob agree on two public elements $X,M \in E_{p}^{(m)}$ such that $M \notin \mathrm{Cen}(X)$.
\begin{enumerate}
\item Alice chooses two different elements $A_{1}, A_{2} \in H(M)$ and transmits $G_{A} = A_{1} X A_{2}$ to Bob.
\item Bob chooses $B_{1}, B_{2} \in \mathrm{Cen}(M)$ such that $B_{1} X \neq X B_{2}$ and transmits $G_{B} = B_{1} X B_{2}$ to Alice.
\item Alice computes $A_{1} G_{B} A_{2}$.
\item Bob computes $B_{1} G_{A} B_{2}$.
\end{enumerate}
\end{prot}

Now it is clear that both Alice and Bob share a common value and that they could exchange their roles in the way they choose their corresponding private information.
The latter is a particular case of \cite[Example 3]{Shpilrain2006b}, where the authors illustrate a protocol based on the DP in a general semigroup.

Trying to break the preceding protocol, in its general setting as it is shown in \cite[Example 3]{Shpilrain2006b} gives rise to the following problem directly related to the DP.

\begin{defn}
Let $G$ be a semigroup, $A, B \subseteq G$ two subsemigroups such that $a b = b a$ for every $a \in A$ and $b \in B$ and assume that $x \in G$.
The \textbf{DH Decomposition Problem (DHDP)} consists in given two elements $a_{1} x a_{2}$ and $b_{1} x b_{2}$, with $a_{1}, a_{2} \in A$ and $b_{1}, b_{2} \in B$ such that provide a DHDP key exchange as above, find the element $a_{1} b_{1} x b_{2} a_{2}$.
\end{defn}

It is immediate that being able to solve the DP implies that we will be able to solve the DHDP.

On the other hand, in \cite{Climent2011a} (see also \cite{Climent2011c,Climent2012c}) the authors introduce a key exchange protocol over the noncommutative ring $\mathrm{End}(\mathbb{Z}_{p} \times \mathbb{Z}_{p^{2}})$ that fits with the previous protocol.
In \cite{Kamal2012} the authors solve the DHDP problem using some invertible elements in $\mathrm{End}(\mathbb{Z}_{p} \times \mathbb{Z}_{p^{2}})$ resulting a cryptanalysis of that proposal.

From the key exchange based on the DP, and analogously to the ElGamal cryptosystem, we can derive a public key cryptosystem whose cryptanalysis is computationally equivalent to the DHDP.
We will give it in a general setting for semigroups, but let us introduce first the following notation.
Let $t \in \mathbb{Z}^{+}$ and consider a one-to-one map $\beta : G \longrightarrow \mathbb{Z}_{2}^{t}$; so, for $x \in G$, $\beta(x)$ is the binary representation of $t$ digits of $x$.
Moreover, we denote by $\oplus$ the bitwise \textsf{xor} operation in $\mathbb{Z}_{2}^{t}$.

\begin{prot}[ElGamal DP protocol (EGDP protocol)] \label{prot2}
Let $G$ be a semigroup and assume that $A, B \subseteq G$ are two subsemigroups such that $a b = b a$ for every $a \in A$ and $b \in B$.
Let $m$ be the message that Bob desires to send Alice.
\begin{enumerate}
\item \label{prot2.1}
Alice chooses $x \in G$ such that $x a \neq a x$, for every $a \in A$, and chooses $a_{1}, a_{2} \in A$.
\item \label{prot2.2}
Alice makes public the pair $(x, a_{1} x a_{2})$.
\item \label{prot2.3}
Bob chooses $b_{1}, b_{2} \in B$ randomly and such that $x b_{i} \neq b_{i} x$ for $i = 1,2$ and sends Alice the pair $(f,d) = (b_{1} x b_{2}, \beta^{-1}(\beta(m) \oplus \beta(b_{1} a_{1} x a_{2} b_{2}))$.
\item \label{prot2.4}
Alice recovers $m$ by computing $m = \beta^{-1}(\beta(d) \oplus \beta(a_{1} f a_{2}))$.
\end{enumerate}
\end{prot}

Part \ref{prot2.4} of the above protocol is correct because
\begin{align*}
  \beta(d) \oplus \beta(a_{1} f a_{2})
    & = \beta(d) \oplus \beta(a_{1} b_{1} x b_{2} a_{2}) \\
    & = \beta(\beta^{-1}(\beta(m) \oplus \beta(b_{1} a_{1} x a_{2} b_{2}))) \oplus \beta(a_{1} b_{1} x b_{2} a_{2}) \\
    & = \beta(m) \oplus \beta(b_{1} a_{1} x a_{2} b_{2}) \oplus \beta(a_{1} b_{1} x b_{2} a_{2}) \\
    & = \beta(m)
\end{align*}
since $\beta(b_{1} a_{1} x a_{2} b_{2}) = \beta(a_{1} b_{1} x b_{2} a_{2})$ as $b_{i} a_{i} = a_{i} b_{i}$ for $i = 1,2$.
Consequently,
\[
  \beta^{-1}(\beta(d) \oplus \beta(a_{1} f a_{2})) = \beta^{-1}(\beta(m)) = m.
\]

We point out that if we are considering a ring $R$ instead of a semigroup $G$, then we can substitute the map $\beta : G \longrightarrow \mathbb{Z}_{2}^{t}$ by the identity map in $R$ and the \textsf{xor} operation $\oplus$ by the addition in $R$.
Here we need the existence of a zero element in order to get $m = d - a_{1} f a_{2}$.

\begin{expl} \label{expl1}
\begin{enumerate}
\item \label{expl1.1}
If the semigroup $G$ is commutative, then we have that EGDP corresponds to the ElGamal public key cryptosystem defined by the action of $G$ over itself (cf.\ \cite{Maze2007}).
In this case, as we pointed out in Section~\ref{Intro}, the security of the cryptosystem is based on the difficulty of what the authors in \cite{Maze2007} called the SAP.

\item \label{expl1.2}
Let us consider the ring $E_{p}^{(m)}$.
Then Alice and Bob can agree in a public element $M \in E_{p}^{(m)}$.
Now let $S \in E_{p}^{(m)}$ be the secret that Bob wants to send Alice.
Then $S$ may be sent in a confidential manner in the following way:
\begin{enumerate}
\item Alice chooses $N \in E_{p}^{(m)}$ such that $N M \neq M N$ and two elements $A_{1}, A_{2} \in H(M)$ (see equation (\ref{eq6})), and publishes her public key $(N, A_{1} N A_{2})$.
\item Bob chooses randomly two other elements $B_{1}, B_{2} \in \mathrm{Cen}(M)$ (or simply $B_{1}, B_{2} \in H(M)$ as Alice) and sends her the pair given by $(F,D) = (B_{1} N B_{2}, S + B_{1} A_{1} N A_{2} B_{2})$.
\item Alice recovers $S$ by computing $D - A_{1} F A_{2}$ due to the fact that $A_{i}$ and $B_{i}$ commute for $i = 1,2$.
\end{enumerate}
\end{enumerate}
\end{expl}

The following result shows the relation between the cryptanalysis of the EGDP and DHDP protocols.

\begin{thm}
Breaking the EGDP protocol is equivalent to solve the DHDP.
\end{thm}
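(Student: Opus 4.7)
The plan is to establish both implications by constructing explicit reductions between the two problems, exploiting the structure of the EGDP ciphertext and the fact that the map $\beta$ and the bitwise \textsf{xor} operation are both invertible.

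For the easy direction, namely that a DHDP solver breaks EGDP, I would argue as follows. An attacker observing the public key $(x, a_{1} x a_{2})$ and an EGDP ciphertext $(f, d) = (b_{1} x b_{2}, \beta^{-1}(\beta(m) \oplus \beta(b_{1} a_{1} x a_{2} b_{2})))$ possesses exactly the inputs required by the DHDP, namely $x$, $a_{1} x a_{2}$ and $b_{1} x b_{2}$. Invoking the DHDP oracle produces $a_{1} b_{1} x b_{2} a_{2}$, which by the commutation condition $a_{i} b_{i} = b_{i} a_{i}$ equals $b_{1} a_{1} x a_{2} b_{2}$. The attacker then recovers $m$ via $\beta^{-1}(\beta(d) \oplus \beta(a_{1} b_{1} x b_{2} a_{2}))$, reproducing exactly the computation Alice performs in step~\ref{prot2.4}.

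For the converse, I would show that an EGDP breaker can be turned into a DHDP solver by manufacturing an artificial ciphertext. Given a DHDP instance consisting of $x$, $a_{1} x a_{2}$ and $b_{1} x b_{2}$, present the EGDP oracle with the public key $(x, a_{1} x a_{2})$ and the ciphertext $(f, d) := (b_{1} x b_{2}, \beta^{-1}(\mathbf{0}))$, where $\mathbf{0}$ denotes the zero vector of $\mathbb{Z}_{2}^{t}$. The oracle returns the supposed plaintext $m = \beta^{-1}(\beta(d) \oplus \beta(a_{1} f a_{2})) = \beta^{-1}(\mathbf{0} \oplus \beta(a_{1} b_{1} x b_{2} a_{2})) = a_{1} b_{1} x b_{2} a_{2}$, which (again using $a_{i} b_{i} = b_{i} a_{i}$) is precisely the DHDP answer.

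The main conceptual point to emphasise, rather than any computational obstacle, is that $\beta$ is a bijection and $\oplus$ is its own inverse, so the EGDP masking layer is information-theoretically transparent to whoever can compute the shared element $a_{1} b_{1} x b_{2} a_{2}$. Consequently all the work of both reductions reduces to algebraic rearrangement using the commutation hypothesis between $A$ and $B$. I would close the proof by noting explicitly that, in the ring-based instantiation described in Example~\ref{expl1}(\ref{expl1.2}), the same argument applies verbatim with $\beta$ replaced by the identity on $E_{p}^{(m)}$ and $\oplus$ replaced by addition (so one chooses $d := 0$ instead of $d := \beta^{-1}(\mathbf{0})$).
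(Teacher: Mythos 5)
Your proposal is correct and follows essentially the same two-way reduction as the paper: a DHDP oracle yields $b_{1}a_{1}xa_{2}b_{2}$ and hence unmasks $d$, while an EGDP decryptor queried on a forged ciphertext reveals the shared element. Your converse direction, which fixes $d=\beta^{-1}(\mathbf{0})$ so that the decryptor outputs $a_{1}b_{1}xb_{2}a_{2}$ directly, is a cleaner rendering of the paper's somewhat circular phrasing (``she encrypts $m$ \ldots to get $d$''), but it is the same underlying argument.
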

\begin{proof}
Assume that Eve can solve the DHDP and she wants to get $m$ from
\[
  (b_{1} x b_{2}, \beta^{-1}(\beta(m) \oplus \beta(b_{1} a_{1} x a_{2} b_{2}))).
\]
Since Eve is able to solve the DHDP, then she gets $b_{1} a_{1} x a_{2} b_{2}$ from $b_{1} x b_{2}$ and the Alice's public key $a_{1} x a_{2}$.
Thus she can recover
\[
  m
  =
  \beta^{-1}(\beta(m) \oplus \beta(b_{1} a_{1} x a_{2} b_{2}) \oplus \beta(b_{1} a_{1} x a_{2} b_{2})).
\]

Now assume that Eve can solve the EGDP.
Then she can obtain any message $m$ from the information $x, a_{1} x a_{2}, b_{1} x b_{2}, \beta^{-1} (\beta(m) \oplus \beta(b_{1} a_{1} x a_{2} b_{2}))$.
If Eve wishes to get $b_{1} a_{1} x a_{2} b_{2}$ from $x, a_{1} x a_{2}, b_{1} x b_{2}$, then she encrypts $m$ using $b_{1} x b_{2}$ as random element in step~\ref{prot2.3} of Protocol~\ref{prot2}, to get $d = \beta^{-1}(\beta(m) \oplus \beta(b_{1} a_{1} x a_{2} b_{2}))$ and thus, she gets the solution to the DHDP.
\end{proof}

Let us show how an analogous reasoning to the one used in the cryptanalysis introduced in \cite{Kamal2012} may apply to break the EGDP protocol.
Let us assume that Bob sends Alice the pair
\[
  (f,d)
  =
  (b_{1} x b_{2}, \beta^{-1}(\beta(m) \oplus \beta(b_{1} a_{1} x a_{2} b_{2})))
\]
as in the EGDP algorithm and that Eve is able to find $w_{1}, w_{2}$ commuting with every element in $A$ and such that $f w_{2} = w_{1} x$.
Suppose also that $w_{2}$ is invertible in the semigroup $G$.
Then
\[
  w_{1} a_{1} x a_{2} w_{2}^{-1}
  =
  a_{1} w_{1} x w_{2}^{-1} a_{2}
  =
  a_{1} f a_{2}
\]
and thus $\beta^{-1}(\beta(d) \oplus \beta(w_{1} a_{1} x a_{2} w_{2}^{-1})) = m$.

Therefore, the further the subgroup of units in $G$ is from being $G$, the more difficult will be to apply this reasoning to break the EGDP protocol.
This applies to the case of $E_{p}^{(m)}$ since as we mentioned previously, a suitable choice of $p$ and $m$ provides a ring such that almost all its elements are not units.

We will finish this paper by giving some conditions on the public key used for the EGDP protocol in the case of the ring $E_{p}^{(m)}$ in order to avoid an attack by reducing it to find a solution to a certain SAP.

Let us assume that Alice's public key is given by the pair $(X, A_{1} X A_{2})$ and let Eve be an attacker.
If Eve is able to find $M \in Z(E_{p}^{(m)})$ such that it is a solution of the SAP $M X = A_{1} X A_{2}$, then Eve may choose $H \in Z(E_{p}^{(m)})$ and invertible, say
\[
  H
  =
  \mathrm{diag}
  \left(
    h_{0},
    h_{0} + h_{1} p,
    \ldots,
    \sum_{l=0}^{m-2} h_{l} p^{l},
    \sum_{l=0}^{m-1} h_{l} p^{l}
  \right),
\]
with $h_{0} \not \equiv_{p} 0$ (cf.\ \cite{Climent2014d}).
Thus Eve writes $M H^{-1} H$ and gets that
\[
  A_{1} X A_{2}
  =
  M X
  =
  M H^{-1} X H,
\]
solving the DP and therefore she solves the DHDP.

As previously noted, the next result gives conditions on the public key of the EGDP protocol that avoid breaking it by reducing the DP to a SAP as above explained.
Its proof is a direct application of Corollary~\ref{conditionSAP}.

\begin{cor}
In the precedent situation, let $(X,P)$, with $P = A_{1} X A_{2}$, be the public key of some user of the EGDP protocol over the ring $E_{p}^{(m)}$.
If there exists $k = 1,2,\ldots,m$ such that $X_{i,k} \not\equiv_{p} 0$ for every $i = 1,2,\ldots,m$ and there exists $j$ such that $X_{j,k}^{-1} P_{j,k} \not \equiv_{p} X_{1,k}^{-1} P_{1,k}$, then an attacker cannot break the EGDP by reducing it to solve a SAP over the ring $E_{p}^{(m)}$.
\end{cor}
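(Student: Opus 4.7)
The plan is to reduce the matrix-level SAP $MX = P$ that the attacker tries to solve to the vector-level SAP handled by Corollary~\ref{conditionSAP}. The key observation is that by Theorem~\ref{centerEpm}, any hypothetical solution $M \in Z(E_{p}^{(m)})$ is diagonal, so the matrix equation $MX = P$ decouples column by column. Fixing the index $k$ given by the hypothesis and equating the $k$-th columns on both sides produces $M \cdot R = T$, where
\[
  R = (X_{1,k}, X_{2,k}, \ldots, X_{m,k})
  \quad \text{and} \quad
  T = (P_{1,k}, P_{2,k}, \ldots, P_{m,k}).
\]

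The next step is to check that these tuples legitimately belong to $\mathbb{Z}_{p} \times \mathbb{Z}_{p^{2}} \times \cdots \times \mathbb{Z}_{p^{m}}$, so that the action and Corollary~\ref{conditionSAP} apply. For $i \leq k$ this is immediate from the definition of $E_{p}^{(m)}$, and for $i > k$ one uses the inclusion $p^{i-k} \mathbb{Z}_{p^{k}} \subseteq \mathbb{Z}_{p^{i}}$. In particular, since the $(i,k)$-entry with $i > k$ is automatically a multiple of $p$, the hypothesis that $X_{i,k} \not\equiv_{p} 0$ for every $i$ forces $k = m$; this is the only slightly subtle point in the bookkeeping.

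With the identification $r_{j-1} = X_{j,k}$ and $t_{j-1} = P_{j,k}$, the hypotheses of the corollary translate verbatim into those of Corollary~\ref{conditionSAP}: $p \nmid r_{j}$ for every $j = 0, 1, \ldots, m-1$, and the existence of an index for which $r^{-1}t$ differs from $r_{0}^{-1} t_{0}$ modulo $p$. Corollary~\ref{conditionSAP} then rules out the existence of a central $M$ with $M \cdot R = T$, and hence the existence of a central $M$ with $MX = P$. This is exactly the ingredient the attacker needs in order to carry out the reduction $P = MX = M H^{-1} X H$ described above, so the corresponding SAP-based attack on the EGDP protocol is blocked. The remainder of the argument is purely notational matching and requires no further calculation.
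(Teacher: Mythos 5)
Your reduction is correct and is exactly the argument the paper intends: the paper's entire proof is the single remark that the statement is a direct application of Corollary~\ref{conditionSAP}, and your column-by-column decoupling of $M X = P$ for a diagonal central $M$, identifying the $k$-th columns of $X$ and $P$ with the pair $(R,T)$ of that corollary, is precisely how that application goes. Your additional observation that the hypothesis $X_{i,k} \not\equiv_{p} 0$ for all $i$ can only hold for $k = m$ is accurate and is a detail the paper does not make explicit.
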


\section*{Acknowledgments}

The authors would like to thank the reviewers their valuable comments as well as some references that helped to enhance the paper.



\begin{thebibliography}{10}

\bibitem{Anshel2001}
Iris Anshel, Michael Anshel, Benji Fisher, and Dorian Goldfeld.
\newblock New key agreement protocols in braid group cryptography.
\newblock In David Naccache, editor, {\em Topics in Cryptology -- CT-RSA 2001},
  volume 2020 of {\em Lecture Notes in Computer Science}, pages 13--27.
  Springer-Verlag, Berlin, 2001.

\bibitem{Anshel1999}
Iris Anshel, Michael Anshel, and Dorian Goldfeld.
\newblock An algebraic method for public-key cryptography.
\newblock {\em Mathematical Research Letters}, 6:1--5, 1999.

\bibitem{Bergman1974}
George~M. Bergman.
\newblock Some examples in {PI} ring theory.
\newblock {\em Israel Journal of Mathematics}, 18:257--277, 1974.

\bibitem{Climent2006b}
Joan-Josep Climent, Francisco Ferr\'{a}ndez, Jos\'{e}-Francisco Vicent, and
  Antonio Zamora.
\newblock A nonlinear elliptic curve cryptosystem based on matrices.
\newblock {\em Applied Mathematics and Computation}, 174:150--164, 2006.

\bibitem{Climent2011c}
Joan-Josep Climent, Pedro~R. Navarro, and Leandro Tortosa.
\newblock Key exchange protocols over noncommutative rings. {The} case of
  $\mathrm{End}(\mathbb{Z}_{p} \times \mathbb{Z}_{p^{2}})$.
\newblock In Jes\'{u}s Vigo~Aguiar, editor, {\em Proceedings of the 11th
  International Conference on Computational and Mathematical Methods in Science
  and Engineering (CMMSE 2011)}, pages 357--364, 2011.

\bibitem{Climent2011a}
Joan-Josep Climent, Pedro~R. Navarro, and Leandro Tortosa.
\newblock On the arithmetic of the endomorphisms ring
  $\mathrm{End}(\mathbb{Z}_{p} \times \mathbb{Z}_{p^{2}})$.
\newblock {\em Applicable Algebra in Engineering, Communication and Computing},
  22(2):91--108, 2011.

\bibitem{Climent2012c}
Joan-Josep Climent, Pedro~R. Navarro, and Leandro Tortosa.
\newblock Key exchange protocols over noncommutative rings. {The} case of
  $\mathrm{End}(\mathbb{Z}_{p} \times \mathbb{Z}_{p^{2}})$.
\newblock {\em International Journal of Computer Mathematics},
  89(13--14):1753--1763, 2012.

\bibitem{Climent2014d}
Joan-Josep Climent, Pedro~R. Navarro, and Leandro Tortosa.
\newblock An extension of the noncommutative {Bergman's} ring with a large
  number of noninvertible elements.
\newblock {\em Applicable Algebra in Engineering, Communication and Computing},
  25(5):347--361, 2014.

\bibitem{Diffie1976}
Whitfield~D. Diffie and Martin~E. Hellman.
\newblock New directions in cryptography.
\newblock {\em IEEE Transactions on Information Theory}, 22(6):644--654, 1976.

\bibitem{ElGamal1985}
Taher ElGamal.
\newblock A public key cryptosystem and a signature scheme based on discrete
  logarithms.
\newblock {\em IEEE Transactions on Information Theory}, 31(4):469--472, 1985.

\bibitem{Kamal2012}
Abdel~Alim Kamal and Amr~M. Youssef.
\newblock Cryptanalysis of a key exchange protocol based on the endomorphisms
  ring $\mathrm{End}(\mathbb{Z}_{p} \times \mathbb{Z}_{p^{2}})$.
\newblock {\em Applicable Algebra in Engineering, Communication and Computing},
  23(3--4):143--149, 2012.

\bibitem{Ko2007}
Ki~Hyoung Ko, Jang~Won Lee, and Tony Thomas.
\newblock Towards generating secure keys for braid cryptography.
\newblock {\em Designs, Codes and Cryptography}, 45(3):317--333, 2007.

\bibitem{Ko2000}
Ki~Hyoung Ko, Sang~Jin Lee, Jung~Hee Cheon, Jae~Woo Han, Ju-sung Kang, and
  Choonsik Park.
\newblock New public-key cryptosystem using braid groups.
\newblock In Mihir Bellare, editor, {\em Advances in Cryptology -- CRYPTO
  2000}, volume 1880 of {\em Lecture Notes in Computer Science}, pages
  166--183. Springer-Verlag, Berlin, 2000.

\bibitem{Mahalanobis2008}
Ayan Mahalanobis.
\newblock The {Diffie}-{Hellman} key exchange and non-abelian nilpotent groups.
\newblock {\em Israel Journal of Mathematics}, 165:161--187, 2008.

\bibitem{Mahalanobis2008b}
Ayan Mahalanobis.
\newblock A simple generalization of the {ElGamal} cryptosystem to non-abelian
  groups.
\newblock {\em Communications in Algebra}, 36(10):3878--3889, 2008.

\bibitem{Mahalanobis2013}
Ayan Mahalanobis.
\newblock Are matrices useful in public-key cryptography?
\newblock {\em International Mathematical Forum}, 8(39):1939--1953, 2013.

\bibitem{Mahalanobis2015}
Ayan Mahalanobis.
\newblock The {MOR} cryptosystem and finite $p$-groups.
\newblock In Delaram Kahrobaei and Vladimir Shpilrain, editors, {\em
  Algorithmic Problems of Group Theory, Their Complexity, and Applications to
  Cryptography}, volume 633 of {\em Contemporary Mathematics}, pages 81--95.
  American Mathematical Society, Providence, RI, 2015.

\bibitem{Maze2007}
G\'{e}rard Maze, Chris Monico, and Joachim Rosenthal.
\newblock Public key cryptography based on semigroup actions.
\newblock {\em Advances in Mathematics of Communications}, 1(4):489--507, 2007.

\bibitem{Menezes1996bk}
Alfred~J. Menezes, Paul~C. van Oorschot, and Scott~A. Vanstone.
\newblock {\em Handbook of Applied Cryptography}.
\newblock CRC Press, Boca Raton, FL, 1996.

\bibitem{Myasnikov2008bk}
Alexei~G. Myasnikov, Vladimir Shpilrain, and Alexander Ushakov.
\newblock {\em Group-based cryptography}.
\newblock Birkh\"{a}user Verlag, Basel, Switzerland, 2008.

\bibitem{Paeng2001}
Seong-Hun Paeng, Kil-Chan Ha, Jae~Heon Kim, Seongtaek Chee, and Choonsik Park.
\newblock New public key cryptosystem using finite non abelian groups.
\newblock In Joe Kilian, editor, {\em Advances in Cryptology -- CRYPTO 2001},
  volume 2139 of {\em Lecture Notes in Computer Science}, pages 470--485.
  Springer-Verlag, Berlin, 2001.

\bibitem{Rivest1978}
Ron~L. Rivest, Adi Shamir, and Leonard Adleman.
\newblock A method for obtaining digital signatures and public-key
  cryptosystems.
\newblock {\em Communications of the ACM}, 21(2):120--126, 1978.

\bibitem{Sakalauskas2003}
Eligijus Sakalauskas and Tomas Burba.
\newblock Basic semigroup primitive for cryptographic session key exchange
  protocol ({SKEP}).
\newblock {\em Information Technology and Control}, 28(3):76--80, 2003.

\bibitem{Shpilrain2006}
Vladimir Shpilrain and Alexander Ushakov.
\newblock A new key exchange protocol based on the decomposition problem.
\newblock {\em Contemporary Mathematics}, 418:161--167, 2006.

\bibitem{Shpilrain2006b}
Vladimir Shpilrain and Gabriel Zapata.
\newblock Combinatorial group theory and public key cryptography.
\newblock {\em Applicable Algebra in Engineering, Communication and Computing},
  17(3--4):291--302, 2006.

\bibitem{Sidelnikov1994}
V.~M. Sidelnikov, M.~A. Cherepnev, and V.~V. Yashchenko.
\newblock Systems of open distribution of keys on the basis of noncommutative
  semigroups.
\newblock {\em Russian Academy of Sciences. Doklady Mathematics},
  48(2):384--386, 1994.

\bibitem{Stickel2005}
Eberhard Stickel.
\newblock A new method for exchanging secret keys.
\newblock In {\em Proceedings of the Third International Conference on
  Information Technology and Applications (ICITA'05)}, pages 426--430, Sidney,
  Australia, 2005.

\bibitem{Stinson1995bk}
Doutglas~R. Stinson.
\newblock {\em Cryptography. Theory and Practice}.
\newblock CRC Press, Boca Raton, FL, 1995.

\bibitem{Thomas2008}
Tony Thomas and Arbind~Kumar Lal.
\newblock A zero-knowledge undeniable signature scheme in non-abelian group
  setting.
\newblock {\em International Journal of Network Security}, 6(3):265--269, 2008.

\bibitem{Yoo2000}
Heajoung Yoo, Seokhie Hong, Sangjin Lee, Jongin Lim, Okyeon Yi, and Maenghee
  Sung.
\newblock A proposal of a new public key cryptosystem using matrices over a
  ring.
\newblock In Ed~Dawson, Andrew Clark, and Colin Boyd, editors, {\em Information
  Security and Privacy}, volume 1841 of {\em Lecture Notes in Computer
  Science}, pages 41--48. Springer-Verlag, Berlin, 2000.

\end{thebibliography}
%

\end{document}